\newtheorem{example}{Example}
\newtheorem{conjecture}{Conjecture}
\newcommand{\F}{\mathbb{F}}
\DeclareMathOperator{\diag}{diag}
\DeclareMathOperator{\supp}{supp}
\DeclareMathOperator{\Rep}{Rep}
\DeclareMathOperator{\lcm}{lcm}
\title{{Private Information Retrieval from Coded Databases with Colluding Servers}\thanks{Submitted to the editors \today.
\funding{D.\ Karpuk is supported by Academy of Finland grant 268364. C.\ Hollanti is supported by Academy of Finland grants 276031, 282938, and 303819. \newline 
All authors are with the Department of Mathematics and Systems Analysis, Aalto University, Finland.}}}
\author{
  Ragnar Freij-Hollanti\thanks{Email: \email{ragnar.freij@aalto.fi}}
  \and Oliver W.~Gnilke\thanks{Email: \email{oliver.gnilke@aalto.fi}}
  \and Camilla Hollanti\thanks{Email: \email{camilla.hollanti@aalto.fi}}
  \and David A.~Karpuk\thanks{Email: \email{david.karpuk@aalto.fi}}
}
\begin{document}

\maketitle

\begin{abstract}
We present a general framework for Private Information Retrieval (PIR) from arbitrary coded databases, that allows one to adjust the rate of the scheme to the suspected number of colluding servers.   If the storage code is a generalized Reed-Solomon code of length $n$ and dimension $k$, we design PIR schemes that achieve a PIR rate of $\frac{n-(k+t-1)}{n}$ while protecting against any $t$ colluding servers, for any $1\leq t\leq n-k$. This interpolates between the previously studied cases of $t=1$ and $k=1$ and achieves PIR capacity in both of these cases asymptotically as the number of files in the database grows. 
\end{abstract}

\begin{keywords}
  Private Information Retrieval, Distributed Storage Systems, Generalized Reed-Solomon Codes
\end{keywords}

\begin{AMS}
  68P20, 68P30, 94B27, 14G50 
\end{AMS}

\section{Introduction}
Private information retrieval (PIR) addresses the question of how to
retrieve data items from a database without disclosing information
about the identity of the data items retrieved, and was introduced by Chor, Goldreich,
Kushilevitz and Sudan in
\cite{chor1995private, chor1998private}. The classic PIR model of \cite{chor1998private}
views the database as an $m$-bit binary string $x = [x^1\cdots x^m]
\in \{0,1\}^m$, and assumes that the user wants to retrieve a single bit
$x^i$ without revealing any information about the index $i$.  We consider a natural extension of this model, wherein the database is a string $x = [x^1 \cdots x^m]$ of files $x^i$, which are themselves bit strings, and the user wants to download one of the files $x^i$, without revealing its index. 

The rate of a PIR scheme in this model is measured as the ratio of the gained information over the downloaded information, while upload costs of the requests are usually ignored.
 The trivial solution is to download the entire database. This, however, incurs a
significant communication overhead whenever the database is large, and
is therefore not useful in practice. 
While the trivial solution is the only way to guarantee \emph{information-theoretic
privacy} in the case of a single server~\cite{chor1998private}, this problem can be remedied by replicating the database onto $k$
servers that do not communicate.

The study of PIR recently received renewed attention, when Shah, Rashmi, Ramchandran and Kumar introduced a model of coded private information retrieval (cPIR)~\cite{shah2012, shah2014}. Here, all files are distributed over the servers according to a storage code, so there is no assumption that the contents of all servers are identical. It is shown in~\cite{shah2014} that for a suitably constructed storage code, privacy can be guaranteed by downloading a single bit more than the size of the desired file. However, this requires exponentially many servers in terms of the number of files. Blackburn, Etzion and Paterson achieved the same low download complexity with a linear number of servers~\cite{blackburn2016small}. This is a vast improvement upon~\cite{shah2014}, but still far from applicable storage systems where the number of files tends to dwarf the number of servers.

While~\cite{shah2014} effectively answered the question on how low the communication cost of a PIR scheme can be, it highlighted another cost
parameter that should not be neglected in the era of big data, namely the \emph{storage
overhead}. We define the storage overhead as the ratio of the total number of coded bits
stored on all servers to the total number of uncoded bits of data. Fazeli, Vardy and Yaakobi showed in \cite{fazeli2015pir} that it is possible to reduce the
storage overhead significantly.
However, this requires subpacketizing the file and distributing it over a number of servers that grows to infinity as the desired storage overhead decreases. 

In contrast to the schemes in~\cite{shah2014}, whose strengths appear as the number of servers tends to infinity,  we are considering the setting where we are given a storage system with a fixed number of servers. While this in no way optimizes the storage overhead, it does keep the overhead fixed. Moreover, we allow some subsets of servers to be \emph{colluding}, by which we mean that they may inform each other of their interaction with the user. This is very natural in a distributed storage system where communication between servers is required to recover data in the case of node failures.
PIR over fixed maximum distance separable (MDS) storage systems were considered in~\cite{razan_salim}. There, two PIR schemes were presented for arbitrary $[n,k]$ MDS codes, one of which had rate $\frac{1}{n}$ and protected against $t=n-k$ colluding servers, and the other of which had rate $\frac{n-k}{n}$ and $t=1$. In Section \ref{main_scheme} we present these as special instances of a scheme that can handle any number $1\leq t\leq n-k$ of colluding servers. Curiously, neither the performance of our scheme nor the underlying field size depends on the number of files stored.
The rate of our scheme depends on the minimum distance of a certain star product, and in the case where the storage code is a generalized Reed Solomon (GRS) code (Theorem~\ref{theorem2}), we can achieve a rate of $\frac{n-(k+t-1)}{n}$. 

 The capacity (i.e.\ maximum possible rate) of a PIR scheme for a replicated storage system was derived in~\cite{sun_jafar_1} (without collusion) and~\cite{sun_jafar_2} (with colluding servers). The corresponding capacity of a coded storage system was given in~\cite{bananaman}, in the case of no colluding servers. A previous version of this article conjectured a formula for the capacity of coded PIR with colluding servers, that gives the capacity bounds of~\cite{sun_jafar_2, bananaman} as special cases.  However, this conjecture was recently disproven by Sun and Jafar in \cite{sun_jafar_MDS_TPIR}, who gave an explicit example of a scheme for coded PIR with colluding servers the rate of which exceeded our conjectured upper bound.  The work of \cite{sun_jafar_MDS_TPIR} also characterizes the capacity of PIR for MDS coded data with colluding servers for other parameters, in particular for $m = 2$ files of length $k = n-1$. In general, however, the capacity of coded PIR with colluding servers remains open.  Recently, another PIR scheme for coded databases with colluding servers was presented in \cite{zhang_ge} for MDS coded data.  However, the rates achieved in the present work outperform the rates presented in \cite{zhang_ge} even for a moderate number of files, and the scheme in \cite{zhang_ge} requires the underlying field to be large. The case of a linear storage code that is not necessarily MDS, without collusion, has been studied in \cite{Kumar}. They show that for certain codes a rate of $\frac{n-k}{n}$ can still be achieved, outperforming the general result in Theorem~\ref{correct}.

\section{Coding-Theoretic Preliminaries}

In this section we briefly collect some standard coding-theoretic definitions and results that we will need in the following sections. 

\subsection{Basic Definitions}

We will use $\mathbb{F}_q$ to denote the field with $q$ elements, where $q$ is a prime power, and $[n]$ for the set $\{1,2,\ldots , n\}$.  For any two vectors $v,w\in \F_q^n$, we denote their standard inner product by $\langle v,w \rangle$.  If $V\subseteq \F_q^n$, then we denote its orthogonal complement by 
\begin{equation}
V^\perp = \{w\in \F_q^n\ |\ \langle v,w\rangle = 0\ \text{for all $v\in V$}\}
\end{equation}
and we write $V\perp W$ if $W\subseteq V^\perp$.

For a code $C\subseteq \mathbb{F}^n_q$ or a vector $v\in \F_q^n$ we use $C_I$ and $v_I$ to denote their respective projections onto the coordinates in $I\subseteq[n]$. The support of a codeword $c\in \F^n_q$ is $\supp(c)=\{i\in [n]: c_i\neq 0\}$, and the support of a code $C\subseteq \mathbb{F}^n_q$ is $\supp(C)=\cup_{c\in C}\supp(c)$.  The minimum distance of $C\subseteq \mathbb{F}^n_q$ is 
\begin{equation}
d=d_C = \min\{|I| : |C_{[n]\setminus I}|<|C|\}.
\end{equation}
For linear codes, this can alternatively be written as
\begin{equation}
d =\min\{|\supp(c)| : c\in C\}.
\end{equation}

A linear code $C\subseteq \mathbb{F}^n_q$ of dimension $k$ and with minimum distance $d$ is called an $[n,k,d]$-code, or an $[n,k,d]_q$-code if we wish to emphasize the field of definition. By an elementary result that is usually attributed to Singleton~\cite{Singleton}, if $C$ is an $[n,k,d]$-code, then 
\begin{equation}\label{singleton} 
d\leq n - k + 1
\end{equation} 
A code that satisfies~\eqref{singleton} with equality is called a \emph{maximum distance separable} (MDS) code.  An $[n,k,d]$ MDS code will be more concisely denoted as an $[n,k]$ MDS code, with $d = n-k+1$ being implied.

Given a linear $[n,k,d]$-code $C$, a subset $K\subseteq[n]$ of size $|K| = k$ is an \emph{information set} of $C$ if the natural projection $C\rightarrow C_K$ is a bijection.  Equivalently, the columns of any generator matrix of $C$ corresponding to the indices in $K$ are linearly independent.  If $C$ is an MDS code, then every $K\subseteq [n]$ of size $k$ is an information set.

The repetition code $\Rep(n)_q\subseteq\F_q^n$ is the one-dimensional code generated by the all-ones vector. It is an $[n,1]$ MDS code.

\subsection{Generalized Reed-Solomon Codes}  Our proposed PIR scheme will be most interesting when the code defining the storage system is a Generalized Reed-Solomon code.  As such, we recall the basic properties of such codes here.

\begin{definition}[GRS Codes]
	Let $\alpha=[\alpha_1\cdots \alpha_n] \in \mathbb{F}^n_q$ satisfy $\alpha_i \neq \alpha_j$ for $i \neq j$, and let $v=[v_1\cdots v_n] \in {\mathbb{F}_q^\times}^n$. We define the Generalized Reed-Solomon (GRS) code of dimension $k$ associated to these $n$-tuples to be
	\begin{equation}GRS_k(\alpha,v)=\left\{ (v_i f(\alpha_i))_{1 \leq i \leq n}\ |\ f\in \F_q[x],\ \deg(f)<k \right\}.\end{equation}
	\end{definition}
	
	The canonical generator matrix for this code is given by
	\begin{equation}\label{eq:canonical} G(\alpha,v):=\begin{bmatrix}
		1 & \cdots & 1\\
		\alpha_1 & \cdots & \alpha_n \\
		\alpha_1^2 & \cdots & \alpha_n^2 \\
		
		\vdots & \ddots & \vdots \\
		\alpha_1^{k-1} & \cdots & \alpha_n^{k-1} \\
	\end{bmatrix}
	\cdot \diag(v),
	 \end{equation}
	 where $\diag(v)$ is the diagonal matrix with the values $v_i$ on the diagonal. In data storage applications, it is often desirable to have an explicit encoding matrix that is systematic, \emph{i.e.} having an identity submatrix in the first $k$ columns. For this purpose, define
	 \begin{equation}\label{eq:systematic} \tilde{G}(\alpha,v):=\begin{bmatrix}
		f_1(\alpha_1) & \cdots & f_1(\alpha_n)\\
		\vdots & \ddots & \vdots \\
		f_k(\alpha_1) & \cdots & f_k(\alpha_n) \\
	\end{bmatrix}
	\cdot \diag(v),
	 \end{equation}
where \[f_i(x)=  v_i^{-1}\prod_{j\in [k]\setminus\{i\}}\frac{x-\alpha_j}{\alpha_i-\alpha_j}\] for $i=1,\ldots , k$. Note that $f_i(\alpha_i)= v_i^{-1}$ and $f_i(\alpha_j)=0$ for $j\in[k]\setminus\{i\}$, hence this is a systematic generator matrix for $GRS_k(\alpha,v)$.

	The code $GRS_k(\alpha,v)$ is an $[n,k]$ MDS code. From the Lagrange interpolation formula, it follows that the dual of $GRS_k(\alpha, v)$ is given by $GRS_{n-k}(\alpha,u)$ where \begin{equation}
	u_i=(v_i \prod_{j \neq i}(\alpha_i-\alpha_j))^{-1}\label{inverse_coeff}.
	\end{equation}

\subsection{Star Products}  The star product of two codes will play an integral role in  our PIR scheme, essentially determining its rate.

\begin{definition}
	Let $V, W$ be sub-vector spaces of $\F_q^n$. We define the star (or Schur) product $V \star W$ to be the subspace of $\F_q^n$ generated by the Hadamard products $v \star w = [v_1w_1 \cdots v_nw_n]$ for all pairs $v \in V, w \in W$.
\end{definition}

The following proposition collects some basic properties of the star product that will prove useful in the coming sections.

\begin{proposition}\label{star}
The star product satisfies the following properties:
\begin{itemize}
\item[(i)] If $C$ is any linear code in $\F_q^n$ and $\Rep(n)_q\subseteq \F_q^n$ is the repetition code of length $n$ over $\F_q$, then $C\star \Rep(n)_q = C$.
\item[(ii)] If $C$ and $D$ are any linear codes in $\F_q^n$ with $\supp(C)=\supp(D)=[n]$, and $(C\star D)^{\perp} = H$, then $d_H\geq d_{C^\perp}+d_{D^\perp}-2$.
\item[(iii)] If $C\subseteq \F_q^n$ is any MDS code, then $(C\star C^\perp)^\perp = \Rep(n)_q$.
\item[(iv)] The star product of two generalized Reed-Solomon codes in $\F_q^n$ with the same parameter $\alpha$ is again a generalized Reed-Solomon code with parameter $\alpha$.  More specifically, $GRS_k(\alpha,v)\star GRS_\ell(\alpha,w) = GRS_{\min\{k+\ell-1,n\}}(\alpha,v\star w)$ for any parameters $v,w$.
\end{itemize}
\end{proposition}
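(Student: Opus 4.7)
I plan to prove the four items in the order (i), (iv), (ii), (iii), since (ii) is the substantive part and (iii) will follow from it. Item (i) is immediate: $\Rep(n)_q$ is spanned by the all-ones vector $\mathbf{1}$, and $c \star \mathbf{1} = c$ for all $c \in C$. Item (iv) is a polynomial-algebra computation: the star product is spanned by vectors of the form $((v \star w)_i f(\alpha_i) g(\alpha_i))_i$ with $\deg f < k$ and $\deg g < \ell$, and the products $fg$ span exactly the polynomials of degree $< k + \ell - 1$ (every monomial $x^j$ with $j < k + \ell - 1$ factors as $x^a x^b$ with $a < k$, $b < \ell$); when $k+\ell-1 > n$, the evaluation map at the $n$ distinct $\alpha_i$ is already surjective onto $\F_q^n$, collapsing the right-hand side to $\F_q^n = GRS_n(\alpha, v \star w)$ and accounting for the minimum.

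For (ii), fix a nonzero $h \in H$ of minimum weight $w$, and set $S = \supp(h)$, $T = [n] \setminus S$. The key observation is that the map $\phi \colon C \to D^\perp$ defined by $c \mapsto h \star c$ is well-defined, since $\langle h \star c, d\rangle = \langle h, c \star d\rangle = 0$. Its kernel is the subcode of $C$ vanishing on $S$, which has codimension $\dim C_S$, and its image lies in the space of vectors of $D^\perp$ supported on $S$. Identifying the latter with $(D_S)^\perp \subseteq \F_q^S$ via the standard puncturing/shortening duality and applying rank-nullity yields
\begin{equation}
\dim C_S + \dim D_S \leq w.
\end{equation}
Symmetrically, the subcode $\{y \in C^\perp : \supp(y) \subseteq S\} \cong (C_S)^\perp$ has length $w$ and minimum distance at least $d_{C^\perp}$, so the Singleton bound gives $\dim C_S \geq \min(w, d_{C^\perp} - 1)$, and analogously for $D$. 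If $w \geq d_{C^\perp} - 1$ and $w \geq d_{D^\perp} - 1$, summing the two lower bounds against the sum upper bound gives $w \geq d_{C^\perp} + d_{D^\perp} - 2$. Otherwise, say $w < d_{C^\perp} - 1$; then $\dim C_S = w$, while the full-support hypothesis forces $d_{D^\perp} \geq 2$ and hence $\dim D_S \geq 1$, contradicting the sum bound.

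Item (iii) follows from (ii) applied to $(C, C^\perp)$ with $C$ MDS: we obtain $d_H \geq (k+1) + (n-k+1) - 2 = n$, so every nonzero codeword of $H$ has full support. A scalar-cancellation argument (any two linearly independent full-support codewords can be combined to produce one with a zero entry, contradicting $d_H = n$) shows $\dim H \leq 1$. Since $\mathbf{1} \in H$ (from $\langle \mathbf{1}, c \star c'\rangle = \langle c, c'\rangle = 0$ for $c \in C$, $c' \in C^\perp$), equality $H = \Rep(n)_q$ follows. The main obstacle is the bookkeeping in (ii): applying the puncturing/shortening duality in both directions correctly and cleanly dispatching the edge case where the Singleton bound becomes vacuous, which is precisely where the full-support hypothesis must be invoked.
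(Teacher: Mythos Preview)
Your proof is correct. For (i), (iii), and (iv) your arguments coincide with the paper's, up to cosmetic differences (for (iii) the paper invokes the Singleton bound directly to get $\dim H\leq n-d_H+1\leq 1$, rather than your scalar-cancellation, but these are equivalent). The one genuine difference is in (ii): the paper does not prove it at all but simply cites it as Theorem~5 of van Lint's \emph{Introduction to Coding Theory}, whereas you supply a self-contained argument via the map $c\mapsto h\star c$ into $D^\perp$, rank--nullity, puncturing/shortening duality, and Singleton. Your proof of (ii) is correct as written; the full-support hypothesis on $D$ (resp.\ $C$) is exactly what rules out the edge case $w<d_{C^\perp}-1$ (resp.\ $w<d_{D^\perp}-1$), as you note. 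So your treatment is strictly more self-contained than the paper's, at the cost of a paragraph of bookkeeping.
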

\begin{proof}
Property (i) follows immediately from the definition of the star product.  Property (ii) is Theorem 5 of \cite{VanLint}. To see that property (iii) holds, let $H = (C\star C^\perp)^\perp$.  The containment $\Rep(n)_q\subseteq H$ is obvious for any code $C$.  If $C$ is an $[n,k]$ MDS code then $C^\perp$ is an $[n,n-k]$ MDS code, so property (ii) implies that $d_H \geq d_C+ d_{C^\perp} -2 = n$.   Hence by the Singleton bound the dimension of $H$ is $1$ and therefore $H = \Rep(n)_q$.  

To see that property (iv) holds, consider some arbitrary codewords $(v_if(\alpha_i))\in GRS_k(\alpha,v)$ and $(w_ig(\alpha_i))\in GRS_\ell (\alpha,w)$.  We clearly have
\begin{equation}
(v_if(\alpha_i))\star(w_ig(\alpha_i)) = (v_iw_i(fg)(\alpha_i)) \in GRS_{\min\{k+\ell-1,n\}}(\alpha,v\star w)
\end{equation}
hence the containment $GRS_k(\alpha,v)\star GRS_\ell(\alpha,w) \subseteq GRS_{\min\{k+\ell-1,n\}}(\alpha,v\star w)$ holds.  To see the reverse containment, note that $GRS_{\min\{k+\ell-1,n\}}(\alpha,v\star w)$ is generated as an $\F_q$-vector space by codewords of the form $(v_iw_if_m(\alpha_i))$ where $f_m(x) = x^m$ is a monomial of degree $m<k+\ell-1$.  We can clearly decompose such a codeword as
\begin{equation}
(v_iw_if_m(\alpha_i)) = (v_if_a(\alpha_i))\star(w_if_b(\alpha_i))
\end{equation}
where $f_a(x) = x^a$ and $f_b(x) = x^b$ for any $a,b$ such that $a<k$, $b<\ell$, and $a+b = m$.  This shows the reverse inclusion and completes the proof.
\end{proof}

\section{Coded Storage and Private Information Retrieval}\label{DSS}

Let us describe the distributed storage systems we consider; this setup follows that of \cite{razan_salim,bananaman}.  To provide clear and concise notation, we have consistently used superscripts to refer to files, subscripts to refer to servers, and parenthetical indices for entries of a vector.  So, for example, the query $q^i_j$ is sent to the $j^{th}$ server when downloading the $i^{th}$ file, and $y^i_j(a)$ is the $a^{th}$ entry of the vector $y^i_j$ stored on server $j$.

Suppose we have files $x^1,\ldots,x^m\in \F_q^{b\times k}$.  Data storage proceeds by arranging the files into a $bm \times k$ matrix
\begin{equation} \
X=
\begin{bmatrix}
x^1 \\ \vdots \\ x^m
\end{bmatrix}
\end{equation}
Each file $x^i$ is encoded using a linear $[n,k,d]$-code $C$ with generator matrix $G_C$ into an encoded file $y^i = x^iG_C$.  In matrix form, we encode the matrix $X$ into a matrix $Y$ by right-multiplying by $G_C$:
\begin{equation} Y= X G_C = 
\begin{bmatrix}
y^1 \\ \vdots \\ y^m
\end{bmatrix}
= 
\begin{bmatrix}
y_1 & \cdots & y_n
\end{bmatrix}
=
\begin{bmatrix}
y^1_1 & \cdots & y^1_n \\
\vdots & \ddots & \vdots \\
y^m_1 & \cdots & y^m_n
\end{bmatrix}
\end{equation}
The $j^{th}$ column $y_j\in\F_q^{bm\times 1}$ of the matrix $Y$ is stored by the $j^{th}$ server.  Here the vector $y^i_j\in \F_q^{b\times 1}$ represents the part of the $i^{th}$ file stored on the $j^{th}$ server.

Such a storage system allows any $d_C-1$ servers to fail while still allowing users to successfully access any of the files $x^i$.  In particular, if $C$ is an MDS code, the resulting distributed storage system is maximally robust against server failures.

\emph{Private Information Retrieval} (PIR) is the process of downloading a file from a database without revealing to the database which file is being downloaded \cite{chor1998private}.  Here by `database' we mean a collection of servers, e.g.\ all of the $n$ servers used in the distributed storage system described earlier.

\begin{definition}\label{PIR_def}
Suppose we have a distributed storage system as described above.  A \emph{linear PIR scheme over $\F_q$} for such a storage system consists of:
\begin{itemize}
\item[1.] For each index $i\in[m]$, a probability space $(\mathcal{Q}^i,\mu^i)$ of \emph{queries}.  When the user wishes to download $x^i\in \F_q^{b\times k}$, a query $q^i \in \mathcal{Q}^i$ is selected randomly according to the probability measure $\mu_i$.  Each $q^{i}$ is a set $q^{i} = \{q^{i}_1,\ldots,q^{i}_n\}$, where $q^{i}_j$ is sent to the $j^{th}$ server, and furthermore $q^i_j$ is itself a row vector of the form
\begin{equation}
q^i_j = [q^{i1}_j \cdots q^{im}_j]\quad \text{where}\quad q^{i\ell}_j\in \F_q^{1\times b},\ \text{for all $\ell \in [m]$.}
\end{equation}
\item[2.] \emph{Responses} $r^{i}_j = \langle q^{i}_j,y_j\rangle\in \F_q$ which the servers compute and transmit to the user.  We set $r^i = [r^i_1\cdots r^i_n]$ to be the total response vector.
\item[3.] An \emph{iteration process}, which repeats Steps 1.-2.\ a total of $s$ times until the desired file $x^i$ can be reconstructed from the $s$ responses $r^i$
\item[3.] A \emph{reconstruction function} which takes as input the various $r^i$ over all of the $s$ iterations and returns the file $x^i$.

\end{itemize}
\end{definition}

Here we view $b$ and $s$ as secondary parameters, which we are free to adjust to enable the user to download exactly one whole file.  If one restricts to the case of $b =  1$ row per file, the size of the file $k$ may be too small, in which case it is not clear how to take advantage of a high rate scheme which inherently downloads more symbols per iteration than there are in a file.  On the other hand, restricting to schemes with $s = 1$ iteration may fail to download an entire file.  The freedom to adjust $b$ and $s$ allows one to avoid such complications.


The \emph{rate} of a PIR scheme measures its efficiency, by comparing the size of a file with how much information we downloaded in total:

\begin{definition}\label{rate}
The \emph{rate} of a linear PIR scheme is defined to be $\frac{bk}{ns}$.
\end{definition}

Note that Definition \ref{rate} ignores the cost to the user of uploading the queries to the servers. This can be justified by considering $x^i\in V^{b\times k}$ where $V$ is some finite-dimensional vector space over $\F_q$, and encoding and data retrieval proceeds in an obvious way.  In this setting the size of the queries is easily seen to be minimal in comparison to download costs when $\dim V\gg 1$.

It would be more precise to define rate as $\frac{bk}{E[w(q^i)]s}$, where $E(\cdot)$ denotes expectation and $w(q^i)$ is the number of queries $q^i_j$ which are not the zero vector, since such queries can be ignored. For comparison with earlier work on PIR we use \cref{rate} for the remainder of this paper.


\begin{definition}\label{collusion}
A PIR scheme \emph{protects against $t$ colluding servers} if for every set $T=\{j_1,\ldots,j_t\}\subseteq[n]$ of size $t$, we have
\begin{equation}\label{mi}
I(Q^i_T;i) = 0
\end{equation}
where $Q^i_T$ denotes the joint distribution of all tuples $\{q^i_{j_1},\ldots,q^i_{j_t}\}$ of queries sent to the servers in $T$ as we range over all $s$ iterations of the PIR scheme, and $I(\cdot\ ;\cdot)$ denotes the mutual information of two random variables.  
\end{definition}

In other words, for every set $T$ of servers of size $t$, there exists a probability distribution $(\mathcal{Q}_T,\mu_T)$ such that for all $i\in [m]$, the projection of $(\mathcal{Q}^i,\mu^i)$ to the coordinates in $T$ is $(\mathcal{Q}_T,\mu_T)$. Hence, no subset of servers of size $t$ will learn anything about the index $i$ of the file that is being requested.  If a PIR scheme protects against $t$ colluding servers, it also clearly protects against $t'$ colluding servers for all $t'\leq t$.



\section{A General PIR Scheme for Coded Storage with Colluding Servers}\label{main_scheme}

Our goal is to find high-rate PIR schemes which protect against many colluding servers.  To that end, the following construction provides a general PIR scheme for coded databases which protects against a flexible number of colluding servers. 


\subsection{Scheme Construction}\label{construction} Let $C$ be a linear $[n,k,d]_q$ code with generator matrix $G_C$, and consider the distributed storage system $Y = XG_C$ as in Section \ref{DSS}.  We choose another linear code $D\subseteq \F_q^n$, the \emph{retrieval code}.  As we will see, the retrieval code essentially determines the privacy properties of the scheme.

Throughout this section, $i$ will denote the index of the file we wish to retrieve.  We begin by simplifying notation, defining
\begin{equation}
c:= d_{C\star D}-1.
\end{equation}
The queries are constructed so that the total response vector during one iteration is of the form
\begin{equation}
r^i = \text{(codeword of $C\star D$)} + \tilde{y}^i
\end{equation}
where $\tilde{y}^i$ is a vector containing $c$ distinct symbols of $y^i$ in known locations, and zeros elsewhere.  Multiplying $r^i$ by a generator matrix of $(C\star D)^\perp$ then allows us to recover these $c$ symbols.  

To allow the user to download exactly one file over $s$ iterations, we force the file size $bk$ to be an integer multiple of $c$ by setting
\begin{equation}
b = \frac{\lcm(c,k)}{k}\quad\text{and}\quad s = \frac{\lcm(c,k)}{c}
\end{equation}
so that $bk = sc$.  During each iteration of the scheme, we download
\begin{equation}
g : = \frac{k}{s} = \frac{c}{b}
\end{equation} 
symbols from every row of $y^i$.  After $s$ iterations, the scheme will have downloaded $sg = k$ symbols of the $a^{th}$ row $y^{i,a}$ of $y^i$ for all $a \in [b]$.  

We also fix a subset $J\subseteq [n]$ of servers of size
\begin{equation}
|J| = \max\{c,k\}
\end{equation}
which stays constant throughout the scheme.  By reindexing the servers if necessary, we may assume without loss of generality that
\begin{equation}
J = \{1,\ldots,|J|\}
\end{equation}
The set $J$ will be the set of all servers from which we obtain encoded symbols.  We will also make use of sets $J_u^a\subseteq J$ where $u \in [s]$ and $a \in [b]$, which are defined so that during the $u^{th}$ iteration we obtain the symbol $y^i_j(a)$ from every server $j\in J_u^a$.

For clarity of presentation, we will describe Steps 1.-2.\ in detail for the first iteration of the scheme, which will help elucidate the structure of the queries and responses of subsequent iterations.  \newline

1. \emph{Query Construction:} We select $mb$ codewords $d^{\ell,a} = [d^{\ell,a}(1)\cdots d^{\ell,a}(n)]$ uniformly at random from $D$ for $\ell \in [m]$ and $a\in[b]$.  For $\ell \in [m]$ and $j \in [n]$, define
\begin{equation}
d^\ell_j = \left[
d^{\ell,1}(j)  \cdots  d^{\ell,b}(j)
\right] \in \F_q^{1\times b}\quad \text{and}\quad
d_j = 
\left[d^1_j \cdots d^m_j\right]\in \F_q^{1\times mb}
\end{equation}
We partition $J_1 := [c]\subseteq J$ into $b$ subsets as follows:
\begin{equation}
J_1^1= \{1,\ldots,g\},\ J_1^2 = \{g+1,\ldots,2g\},\ \ldots\,,\ J_1^b = \{g(b-1),\ldots,gb = c\}
\end{equation} 
and define the queries $q^i_j$ by 
\begin{equation}\label{query_defn_1}
q^i_j = \left\{\begin{array}{ll}
d_j + e_{b(i-1)+a} & \text{if $j \in J_1^a$} \\
d_j & \text{if $j\not\in J_1$} \\
\end{array}\right.
\end{equation}
where $e_{b(i-1)+a}\in\F_q^{1\times mb}$ denotes the $(b(i-1) + a)^{th}$ standard basis vector. Thus for $j \in J_1^a$, the query $q^i_j$ is simply $d_j$ but with the entry $d^{i,a}(j)$ replaced with $d^{i,a}(j) + 1$.\newline

2. \emph{Responses:}
To understand the response vector $r^i$, we first calculate $r^i_j$ for $j\not\in J_1$.  We have
\begin{equation}
r^i_j = \langle q^i_j,y_j\rangle = \langle d_j,y_j\rangle = \sum_{\ell = 1}^m \langle d_j^\ell, y_j^\ell \rangle = \sum_{\ell = 1}^m \sum_{a = 1}^b d^{\ell,a}(j)y^\ell_j(a).
\end{equation}
For $j\in J_1^{a_0}$ for some $a_0\in [b]$, the same calculation reveals that
\begin{equation}
r^i_j = \sum_{\ell = 1}^m \sum_{a = 1}^b d^{\ell,a}(j)y^\ell_j(a) + y^i_j(a_0).
\end{equation}
We see that the value of the total response vector during the first iteration is
\begin{equation}
r^i 
=
\sum_{\ell = 1}^m\sum_{a = 1}^b\begin{bmatrix}
d^{\ell,a}(1)y_1^\ell(a) \\
\vdots \\
d^{\ell,a}(n)y_n^\ell(a)
\end{bmatrix}
+
\begin{bmatrix}
y^i_1(1) \\ 
\vdots \\
y^i_g(1) \\
\vdots \\
y^i_{g(b-1)}(b) \\
\vdots \\
y^i_c(b) \\
0_{(n-c)\times 1}
\end{bmatrix}
= \underbrace{\sum_{\ell = 1}^m \sum_{a = 1}^b d^{\ell,a}\star y^{\ell,a}}_{\in C\star D} + 
\begin{bmatrix}
y^i_1(1) \\ 
\vdots \\
y^i_g(1) \\
\vdots \\
y^i_{g(b-1)}(b) \\
\vdots \\
y^i_c(b) \\
0_{(n-c)\times 1}
\end{bmatrix}
\end{equation}
where $y^{\ell,a}\in C$ is the $a^{th}$ row of $y^\ell$. \newline

3. \emph{Iteration:} During the $u^{th}$ iteration for $u = 2,\ldots,s$, we repeat Steps 1.-2.\ but recursively define the subset $J_u^a\subseteq J$ to be the cyclic shift  of $J_{u-1}^a$ within $J$ to the right by $g$ indices.  Thus if $J_{u-1}^a = \{j_1,\ldots,j_g\}$, then
\begin{equation}
J_u^a = \{j_1+g,j_2+g,\ldots,j_g + g\}
\end{equation}
where if $j\in J_u^a$ satisfies $j>|J|$, it is replaced with $(j-1)$ (mod $|J|$) $+$ $1$.  We let $J_u = J_u^1\cup\cdots\cup J_u^b$ and define the queries during the $u^{th}$ iteration by
\begin{equation}\label{query_defn_2}
q^i_j = \left\{\begin{array}{ll}
d_{j} + e_{b(i-1)+a} & \text{if } j \in J_u^{a} \\
d_j & \text{if } j\not\in J_u 
\end{array}\right.
\end{equation}
The response vector $r^i$ during the $u^{th}$ iteration is of the form
\begin{equation}
r^i = (\text{codeword of $C\star D$}) + y^i_{J_u}
\end{equation}
where $y^i_{J_u}$ is a vector with entries $y_{j}^i(a)$ in some known positions for all $j\in J_u^a$ and all $a \in [b]$, and zeros elsewhere. \newline

4. \emph{Data Reconstruction:}  Let $S$ be a generator matrix for $(C\star D)^\perp$.  Since $c = d_{C\star D}-1$, every $c$ columns of $S$ are linearly independent.  To reconstruct the file $x^i$, we begin by considering the response vector $r^i$ from the first iteration, and computing
\begin{equation}
Sr^i
=
S(\text{codeword of $C\star D$})
+
S\begin{bmatrix}
y^i_1(1) \\ 
\vdots \\
y^i_c(b) \\
0_{(n-c)\times 1}
\end{bmatrix}
=
S\begin{bmatrix}
y^i_1(1) \\ 
\vdots \\
y^i_c(b) \\
0_{(n-c)\times 1}
\end{bmatrix}.
\end{equation}
From $Sr^i$ we can obtain the values of $y^i_1(1),\ldots,y^i_c(b)$, since the first $c$ columns of $S$ are linearly independent.  If $r^i$ is instead the response during the $u^{th}$ iteration of the scheme, we similarly obtain all entries of the form $y^i_{j}(a)$ for $j\in J_u^a$ and $a\in[b]$ from the product $Sr^i$.  For a fixed row $a$, the sets $J_1^a,\ldots,J_s^a$ are disjoint and consist of $sg = k$ servers in total, hence we retrieve $k$ distinct symbols of $y^{i,a}$ for every row $a$.\newline

One can visualize the entire PIR scheme as in Fig.\ \ref{scheme_fig}, wherein we show what portions of the encoded file $y^i$ we are downloading during each iteration, for parameters $k = 6$, $c = 4$, and $n = 10$ (in the left-hand figure).  Here each file consists of $b = 2$ rows, and the scheme requires $s = 3$ iterations.  We have $J = \{1,\ldots,k\}$ and the sets $J_u^a$ are given by
\begin{equation}
\begin{aligned}
J_1^1 &= \{1,2\},\quad J_1^2 = \{3,4\} \\
J_2^1 &= \{3,4\},\quad J_2^2 = \{5,6\} \\
J_3^1 &= \{5,6\},\quad J_3^2 = \{1,2\}
\end{aligned}
\end{equation}
In Fig.\ \ref{scheme_fig} we denote the encoded symbols downloaded during the first iteration in red, during the second iteration in blue, and during the third iteration in green.  

In the right-hand side of Fig. \ref{scheme_fig} we repeat this exercise for parameters $k = 4$, $c = 6$, and $n = 10$.  In this case we have $b = 3$ rows per file and require $s = 2$ iterations.  The sets $J_u^a$ are given by
\begin{equation}
\begin{aligned}
J_1^1 &= \{1,2\},\quad J_1^2 = \{3,4\},\quad J_1^3  = \{5,6\} \\
J_2^1 &= \{3,4\},\quad J_2^2 = \{5,6\},\quad J_2^3 = \{1,2\}
\end{aligned}
\end{equation}
which are depicted in Fig.\ \ref{scheme_fig}.

\begin{figure}
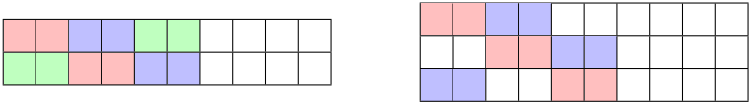

\centering
\begin{center}
\begin{tabular}{|c|c|c|c|c|c|c|c|c|c|}
\hline
\cellcolor{red!25} & \cellcolor{red!25} & \cellcolor{blue!25} & \cellcolor{blue!25} & \cellcolor{green!25} & \cellcolor{green!25} & & & & \\
\hline
\cellcolor{green!25} & \cellcolor{green!25} & \cellcolor{red!25} & \cellcolor{red!25} & \cellcolor{blue!25} & \cellcolor{blue!25} & & & & \\
\hline
\end{tabular}
\quad\quad\quad
\begin{tabular}{|c|c|c|c|c|c|c|c|c|c|}
\hline
\cellcolor{red!25} & \cellcolor{red!25} & \cellcolor{blue!25} & \cellcolor{blue!25} & & & & & & \\
\hline
& & \cellcolor{red!25} & \cellcolor{red!25} & \cellcolor{blue!25} & \cellcolor{blue!25} & & & & \\
\hline
\cellcolor{blue!25} & \cellcolor{blue!25} & & & \cellcolor{red!25} & \cellcolor{red!25}  & & & & \\
\hline
\end{tabular}
\end{center}
\caption{Visualizing the PIR scheme of Section \ref{construction}.  On the left, downloading from a system with parameters $(k,c,n) = (6,4,10)$.  Since $k>c$, we have $J = \{1,\ldots,k\}$, and the scheme requires $b = 2$ rows and $s = 3$ iterations.  On the right, a system with parameters $(k,c,n) = (4,6,10)$.  Here $c > k$, so $J = \{1,\ldots,c\}$ and the scheme requires $b = 3$ rows and $s = 2$ iterations.  Depicted is the encoded file $y^i\in \F_q^{b\times n}$, along with the encoded symbols downloaded in the first (red), second (blue), and third (green) iterations.  The columns which contain colored blocks are those in $J$.}\label{scheme_fig}
\end{figure}


\subsection{Proofs of Correctness and Privacy}\label{Proof}

In this section we provide proofs that the PIR scheme described in the previous subsection is correct (retrieves the desired file) and preserves privacy (does not reveal the index $i$ of the desired file to any group of $t$ colluding servers).

\begin{theorem}\label{correct}
Let $C$ be an $[n,k,d]$-code and suppose we have a retrieval code $D$ such that either (i) $d_{C\star D}-1 \leq k$, or
(ii) there exists $J\subseteq [n]$ of size $\max\{d_{C\star D}-1,k\}$ such that every subset of $J$ of size $k$ is an information set of $C$.  Then the PIR scheme of Section \ref{construction} is correct, that is, retrieves the desired file with rate $(d_{C\star D}-1)/n$.
\end{theorem}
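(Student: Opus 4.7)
The plan is to verify two claims: that the rate equals $(d_{C\star D}-1)/n$, and that the $s$ iterations together reconstruct $x^i$. The rate computation is immediate from \cref{rate} together with the construction's choice $bk = sc$ with $c = d_{C\star D}-1$, giving $bk/(ns) = c/n$. For correctness, the first step is to unwind the inner product $r^i_j = \langle q^i_j, y_j\rangle$ using the definition of $q^i_j$ in \eqref{query_defn_2}. This verifies, exactly as indicated in the scheme description, that during iteration $u$ the response vector decomposes as
\begin{equation}
r^i = h + y^i_{J_u},
\end{equation}
where $h = \sum_{\ell,a} d^{\ell,a} \star y^{\ell,a}$ lies in $C \star D$ (since each $d^{\ell,a} \in D$ and each row $y^{\ell,a} \in C$), and $y^i_{J_u}$ is the vector with entry $y^i_j(a)$ whenever $j \in J_u^a$ and $0$ elsewhere.

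Second, I would verify that applying a generator matrix $S$ of $(C\star D)^\perp$ to $r^i$ recovers $y^i_{J_u}$. Since $S$ is a parity-check matrix of $C \star D$ and $h \in C \star D$, we have $Sh = 0$, so $Sr^i = S\,y^i_{J_u}$. The support of $y^i_{J_u}$ lies in $J_u$, which has cardinality $bg = c = d_{C \star D} - 1$. As $S$ is the parity-check matrix of a code with minimum distance exactly $c+1$, any $c$ of its columns are linearly independent, so the system $S\,y^i_{J_u} = S r^i$ uniquely determines the entries of $y^i_{J_u}$; since we know which positions inside $J_u$ correspond to which row index $a$, we read off $y^i_j(a)$ for all $j \in J_u^a$ and $a \in [b]$.

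Third, I would collate the $s$ iterations to recover the full file. For each fixed row $a \in [b]$, the sets $J_1^a, \ldots, J_s^a$ are obtained from $J_1^a$ by successive cyclic shifts by $g$ inside $J$, and $|J| = \max\{c,k\}$. A short arithmetic check shows that, because $sg = k \le |J|$, these $s$ sets are pairwise disjoint and their union $K_a$ is a $k$-element cyclic interval inside $J$. Under hypothesis (i) we have $|J| = k$, forcing $K_a = J$ for every $a$; by relabelling the servers so that $J$ is an information set of $C$ (possible since $\dim C = k$), the projection $C \to C_{K_a}$ is bijective, whence each row $x^{i,a}$ is recovered by inverting $x^{i,a} \mapsto x^{i,a}(G_C)_{K_a}$. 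Under hypothesis (ii), every $k$-subset of $J$, and in particular each $K_a$, is already an information set, so the same inversion succeeds. Assembling the $b$ recovered rows yields $x^i$.

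The main point requiring care is the combinatorial step: checking that the cyclic shifts of $J_1^a$ stay disjoint inside $J$ and that, in both hypothesis cases, each union $K_a$ is an information set of $C$. Once this is in place, everything else is a direct application of the standard fact that any $c$ columns of a parity-check matrix of a code with minimum distance $c+1$ are linearly independent.
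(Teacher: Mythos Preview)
Your proposal is correct and follows essentially the same approach as the paper's own proof. The paper's argument is terser because most of the work (the form of the response vector, the recovery of the $c$ symbols via $S$, and the bookkeeping of which symbols are obtained in which iteration) is already carried out in the scheme-construction subsection, so the proof itself only needs to observe that the chosen $J$ yields an information set for each row and to compute the rate; you have simply made those deferred steps explicit, including the combinatorial verification that the cyclic shifts $J_1^a,\ldots,J_s^a$ are disjoint and cover a $k$-subset of $J$.
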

\begin{proof}
If condition (i) is satisfied, we choose $J\subseteq [n]$ of size $k$ to be any information set of $C$.  In the data reconstruction phase of the PIR scheme, we retrieve $k$ symbols from each row $y^{i,a}$ of $y^i$, corresponding to the columns belonging to $J$.  Since every $K\subseteq J$ of size $k$ is an information set, this suffices to recover every $y^{i,a}$ and therefore all of $x^i$.  The rate of the scheme is easily seen to be
\begin{equation}
\frac{bk}{ns} = \frac{k\cdot \frac{\lcm(c,k)}{k}}{n\cdot \frac{\lcm(c,k)}{c}} = \frac{d_{C\star D}-1}{n}
\end{equation}
which completes the proof.
\end{proof}

\begin{theorem}\label{private}
Then the PIR scheme described in Section \ref{construction} protects against $d_{D^\perp}-1$ colluding servers.
\end{theorem}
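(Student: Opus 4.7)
The plan is to show that for any subset $T \subseteq [n]$ of size $t = d_{D^\perp}-1$, the joint distribution of the queries $Q^i_T$ sent to the servers in $T$ (across all $s$ iterations) is independent of the file index $i$; the mutual information $I(Q^i_T;i)$ then vanishes by definition. The key technical ingredient is a standard characterization of dual distance that lets us conclude every projection of $D$ onto $t$ coordinates is surjective.

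First I would prove the following claim: for any $T \subseteq [n]$ with $|T| \leq d_{D^\perp}-1$, the projection map $\pi_T : D \to \F_q^{T}$ is surjective. Since a generator matrix $G_D$ of $D$ is a parity-check matrix of $D^\perp$, the minimum distance $d_{D^\perp}$ equals the smallest number of linearly dependent columns of $G_D$. Therefore any $|T| < d_{D^\perp}$ columns of $G_D$ are linearly independent, so the submatrix $(G_D)_T$ has rank $|T|$ and $\pi_T$ is surjective. Because the fibers of a surjective linear map between finite vector spaces all have equal size, a uniformly random codeword $d \in D$ has $\pi_T(d)$ uniformly distributed on $\F_q^T$.

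Next I would trace through the query construction of Section~\ref{construction} for a single iteration $u$. Each of the $mb$ codewords $d^{\ell,a}$ is drawn independently and uniformly from $D$, so applying the claim to each of them and using independence across $(\ell,a)$ shows that the family $\{d^{\ell,a}(j) : j \in T,\ \ell \in [m],\ a \in [b]\}$ consists of $t \cdot m \cdot b$ mutually independent uniform elements of $\F_q$. By the definition of $q^i_j$ in \eqref{query_defn_1} and \eqref{query_defn_2}, for every $j \in T$ we have $q^i_j = d_j + \delta^{i,u}_j$, where $\delta^{i,u}_j \in \F_q^{1 \times mb}$ is a deterministic vector depending only on $i$, $u$, and whether $j$ belongs to some $J_u^a$. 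The joint vector $(q^i_j)_{j \in T}$ is thus a deterministic shift of $(d_j)_{j \in T}$, which is uniform on $\F_q^{t \times mb}$. Hence $(q^i_j)_{j \in T}$ is itself uniform on $\F_q^{t \times mb}$, and its distribution does not depend on $i$.

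Finally, the $s$ iterations use independent fresh draws from $D$, so the full transcript $Q^i_T$ is a concatenation of $s$ mutually independent blocks, each with a distribution independent of $i$. Therefore the distribution of $Q^i_T$ itself is independent of $i$, giving $I(Q^i_T; i) = 0$ as required by \eqref{mi}. I expect the main conceptual point is simply the surjectivity claim about $\pi_T$; once that is established, the rest reduces to observing that a deterministic shift of a uniform random variable is uniform, combined with the routine bookkeeping of independence across the indices $(\ell,a)$ and across iterations.
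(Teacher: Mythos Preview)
Your proposal is correct and follows essentially the same route as the paper's proof: both arguments establish that any $t\leq d_{D^\perp}-1$ columns of a generator matrix of $D$ are linearly independent (equivalently, that $\pi_T:D\to\F_q^T$ is surjective), deduce that the joint vector $(d_j)_{j\in T}$ is uniform on $\left(\F_q^{1\times mb}\right)^t$, observe that the queries are a deterministic translate of this uniform vector and hence themselves uniform independently of $i$, and finally appeal to the independent resampling across iterations. Your write-up is slightly more explicit about the independence bookkeeping across the indices $(\ell,a)$, but there is no substantive difference in strategy.
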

\begin{proof}
Let $T = \{j_1,\ldots,j_t\} \subseteq [n]$ be a set of servers of size $t\leq d_{D^\perp}-1$.   We begin by showing that during a single iteration, we have $I(q^i_{j_1},\ldots,q^i_{j_t};i) = 0$.  From $t\leq d_{D^\perp}-1$ it follows immediately that every $t$ columns of the generator matrix of $D$ are linearly independent.  Therefore the code $D_T$ is the entire space $\F_q^t$.

First consider the distribution of one of the vectors
\begin{equation}
d_j = [d^{1,1}(j)\cdots d^{1,b}(j)\cdots d^{m,1}(j)\cdots d^{m,b}(j)]\in \F_q^{1\times bm}
\end{equation}
for a single $j\in T$.  As $D_{\{j\}}$ is distributed uniformly  on $\F_q$, and the codewords $d^{\ell,a}$ are selected uniformly at random from $D$, it follows that $d_j$ is uniform on $\F_q^{1\times bm}$.  

Similarly, as $D_T$ is all of $\F_q^t$, we see that the joint distribution $\{d_j\ |\ j\in T\}$ is uniform over $\left(\F_q^{1\times bm}\right)^t$.  If $f(i,j)$ denotes the index of the standard basis vector as in (\ref{query_defn_1}) and (\ref{query_defn_2}), we see that
\begin{equation}
\{q^i_{j_1},\ldots,q^i_{j_t}\} = \{d_j + e_{f(i,j)}\ |\ j\in T\cap J\} \cup \{d_j\ |\ j\in T\setminus J\}
\end{equation}
is uniformly distributed for all $i$, as translating the uniform distribution by any vector results again in the uniform distribution. The distribution $\{q^i_{j_1},\ldots,q^i_{j_t}\}$ of the queries is therefore independent of the index $i$ of the desired file, hence $I(q^i_{j_1},\ldots,q^i_{j_t};i) = 0$ is satisfied for a single iteration.

Now consider the joint distribution $Q^i_T$ of all queries to all servers in $T$, as we range over all iterations of the scheme.  For each iteration, the vectors $d^{\ell,a}$ are chosen independently of all other iterations, from which arguments identical to the above show that $Q^i_T$ is uniform on $\left(\F_q^{1\times bm}\right)^{ts}$.  Thus $I(Q^i_T;i) = 0$ as desired.
\end{proof}

\subsection{Examples}  In this subsection we show how some previously constructed PIR schemes fit into the general framework of our scheme.  Throughout this section, we assume that $i$ is the index of the file we wish to retrieve.  In case either $b=1$ or $s=1$, we will suppress all indices relating to rows or iterations, respectively.

\begin{example}\label{example1}
Let $C$ be any systematic $[n,k,d]$ storage code and set $D=\Rep(n)$ so that $C\star D = C$.  The above-outlined scheme has rate $(d_C -1)/n$ and as  
$d_{D^\perp} -1 = 1$, it only provides privacy against non-colluding servers ($t=1$).  For simplicity we assume $d_C-1| k$, and therefore require only $b = 1$ row per file but $s = k/(d_C-1)$ iterations.  We set $J = [k]$.

As $D=\Rep(n)$, sampling from $D$ uniformly at random amounts to sampling uniformly at random from $\F_q$ itself.  Thus the query construction in this example amounts to selecting a single vector $d_0 = [d^1\cdots d^m]\in \F_q^m$ uniformly at random, and setting $d_1 = \cdots = d_n = d_0$.  Now set $J_1 = [d_C-1]$ and define the queries by
\begin{equation}
q^i_j = \left\{ \begin{array}{ll}
d_0 + e_{i} & \text{if } j\in J_1 \\
d_0 & \text{if } j\not\in J_1
\end{array}
\right.
\end{equation}
The total response vector is then easily seen to be
\begin{equation}
r^i = \sum_{\ell = 1}^m d^\ell y^\ell + \begin{bmatrix}
y^i(1) \\ \vdots \\ y^i(d_C-1) \\ 0_{(n-(d_C-1))\times 1}
\end{bmatrix}
\end{equation}
Let $S$ be a generator matrix of $C^\perp$ which is in systematic form.  Then
\begin{equation}
Sr^i = \left[y^i(1) \cdots y^i(d_C-1)\right] = \left[x^i(1)\cdots x^i(d_C-1)\right]
\end{equation}
In the second iteration, we obtain $J_2$ by shifting $J_1$ to the next set of $d_C-1$ servers and repeat the above in the obvious way.  After $k/(d_C-1)$ iterations we recover $[y^i(1)\cdots y^i(k)]= x^i$.  Privacy against any single server is clear, since for a fixed $j$ the queries $q^i_j$ are independent samples of the uniform distribution on $\F_q^m$.

This is essentially a paraphrasing of the scheme of \cite[Theorem 1]{razan_salim} in the language of our construction. However, in \cite{razan_salim}, the scheme is only presented for MDS codes.
\end{example}

\begin{example}\label{example2}
Let $C$ be any $[n,k]$ MDS code and set $D = C^\perp$.  We have $(C\star D)^\perp =  \Rep(n)$ by Proposition \ref{star} (iv). Since $d_{C\star D} -1 = 1$ and $d_{D^\perp} -1 = n-k$, the above-outlined scheme has rate $1/n$ and provides privacy against any $n-k$ colluding servers.  

We have $b = 1$ row per file and require $s = k$ iterations of the scheme.  We set $J = [k]$ and $J_1 = \{1\}$, so that the queries in the first iteration are given by
\begin{equation}
q^i_j = \left\{\begin{array}{ll}
d_j + e_i & \text{if } j = 1 \\
d_j & \text{if } j > 1
\end{array}
\right.
\end{equation}
where the vectors $d_j\in \F_q^n$ are as in the construction.  The response vector is
\begin{equation}
r^i = \sum_{\ell = 1}^m d^\ell\star y^\ell + \begin{bmatrix}
y^i(1) \\ 0_{(n-1)\times 1}
\end{bmatrix}
\end{equation}
where $d^\ell\in D$ and $y^\ell\in C$.  As $(C\star D)^\perp = \Rep(n)$, the reconstruction function takes a particularly simple form.  In particular we can take $S = [1\cdots 1]$, and see immediately that $Sr = y^i(1)$.  Iterating this procedure $k$ times while setting $J_u = \{u\}$ for $u \in [k]$ yields $[y^i(1)\cdots y^i(k)]$, which suffice to reconstruct $x^i$ by the MDS property.

This is the scheme of \cite[Theorem 2]{razan_salim}, again rephrased in the context of our scheme.
\end{example}

\begin{example}\label{example3}
Let $C = \Rep(n)$ and let $D$ be any $[n,t]$ MDS code.  We have $d_{C\star D} -1= d_D -1= n-t$ and $d_{D^\perp} -1= t$.  The above-outlined scheme thus has rate $(n-t)/n$ and provides privacy against any $t$ colluding servers.   We have $b = n - t$ rows per file but require only $s = 1$ iteration of the scheme.  We set $J = [n-t]$.

With $d_j\in \F_q^{1\times m(n-t)}$ as in the scheme construction, the queries $q^i_j$ are of the form
\begin{equation}
q^i_j = \left\{\begin{array}{ll}
d_j + e_{(n-t)(i-1)+j} & \text{if $j\in J$}\\
d_j & \text{if $j\not\in J$}
\end{array}\right.
\end{equation}
which yields a response vector of the form
\begin{equation}
r^i = \sum_{\ell = 1}^m \sum_{a = 1}^{n-t} x^\ell(a)d^{\ell,a} + \begin{bmatrix}
x^i(1) \\
\vdots \\
x^i(n-t) \\
0_{t\times 1}
\end{bmatrix}
\end{equation}
If $S$ is a generator matrix of $D^\perp$ in systematic form, then $Sr^i = x^i$ which completes the retrieval scheme.
\end{example}



\subsection{Remarks on Scheme Construction}

Theorem \ref{correct} implies that for \emph{any} storage code $C$ at all, provided that we choose $D$ such that $d_{C\star D}-1\leq k$, we can choose $J$ to be any information set of $C$ and achieve rate $(d_{C\star D}-1)/n$ and protection against $d_{D^\perp}-1$ colluding servers.  In particular, if $C$ is any MDS code then we can choose any subset $J$ of servers of size $|J| = \max\{d_{C\star D}-1,k\}$.  Thus the scheme achieves the stated rate and protection against collusion for any MDS storage code $C$ and any retrieval code $D$.  

It is likely that condition (ii) in Theorem \ref{correct} is somewhat conservative.  We do not really need every subset of $J$ of size $k$ to be an information set of $C$, only subsets of the form $J_1^a\cup\cdots\cup J_s^a$ for $a\in [b]$, which index the servers retrieving symbols from the $a^{th}$ row of $y^i$.  However, we prefer to state condition (ii) as it is for the sake of simplicity.

Theorem \ref{private} implies that we must have $\supp(D) = [n]$ to achieve any non-trivial privacy with our scheme.  For if $\supp(D)\neq [n]$ then we would have some standard basis vector in $D^\perp$, implying $d_{D^\perp}-1 = 0$. This can be interpreted by saying that every server has to see some amount of randomness.

It may be the case that the user does not have the freedom to adjust the number of rows in a file.  For example, each file might be stored as a single row of $X$, in which case it is not obvious how to take advantage of a scheme which downloads more symbols per iteration than there are in a file.  One way to remedy this is to simply have the user download multiple files.  Thus one could rephrase Example \ref{example3} so that the user downloads $n-t$ files from the database, instead of one file which consists of $n-t$ symbols.  However, according to recent results \cite{bananaman2}, the capacity of such multi-message PIR is higher than that of single-message PIR.  Thus to make a more valid comparison with known rate and capacity results, we have chosen to describe our schemes as only retrieving one file.

While our interest in this paper is in download cost, we observe that the user in each iteration of our scheme uploads $bnm$ symbols from $\F_q$, for a total of $bnms$ total uploaded symbols. In particular, while the download cost does not depend on the number of files stored, the upload cost grows linearly in $m$.  While the upload cost is also depends linearly on $b$ and $s$, the size of the file does as well, so while the upload cost grows with these parameters, so does the total amount of privately downloaded information.

\section{Private Information Retrieval from GRS Codes}\label{PIRfromGRS}

In~\cite{razan_salim}, two PIR schemes were presented for arbitrary $[n,k]$ MDS codes, one of which had rate $\frac{1}{n}$ and protected against $t=n-k$ colluding servers, and the other of which had rate $\frac{n-k}{n}$ and $t=1$. These schemes are essentially variations of Example~\ref{example1} and Example~\ref{example2} in this paper. The authors asked if one can adapt their schemes in the ``intermediate regime'' where $1<t<n-k$ for arbitrary $n$ and $k$. In this section, we show how to do this via the construction in Section \ref{main_scheme} for some suitably chosen $[n,k]$ MDS storage codes, namely for GRS codes.

By Proposition \ref{star} we know that the class of all GRS codes associated to a fixed $n$-tuple $\alpha\in\F_q^n$ is closed under taking star products and duals. Moreover, while the dimension of a star product $C\star D$ of two generic codes can be as high as $\dim(C)\cdot\dim(D)$, in the case of GRS codes it is only $\dim(C)+\dim(D)-1$, which is useful when we want to maximize minimum distances. Indeed, the following theorem from \cite{randi13}, which can be seen as a multiplicative version of the Singleton bound, shows that among all storage codes, our PIR schemes gives the best privacy-rate tradeoff precisely for GRS storage codes.
\begin{theorem}[Immediate corollary of \cite{randi13}, Theorem 2 and \cite{mirandola_zemor}, Theorem 14]\label{MultSing}
Let $C_1$ and $C_2$ be linear codes  of dimension $k_1$ and $k_2$ and support $[n]$. Then
\begin{equation}
d_{C_1\star C_2}-1\leq\max\{0 , n - (k_1+k_2-1)\}.
\end{equation}
Conversely, if neither $C_1$, $C_2$, nor $(C_1\star C_2)^\perp$ is the length $n$ repetition code, then the above bound is an equality exactly when both $C_1$ and $C_2$ are GRS codes.
\end{theorem}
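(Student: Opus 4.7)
The statement is packaged from the two cited results, so the plan is to assemble them and fill in the routine steps. First, I would invoke \cite{randi13}, Theorem 2 to obtain the multiplicative Singleton inequality $d_{C_1\star C_2}-1\leq\max\{0,n-(k_1+k_2-1)\}$. Informally, one expects $\dim(C_1\star C_2)$ to be $k_1+k_2-1$ whenever both codes have full support and this quantity does not exceed $n$, and Singleton then forces the minimum distance to be at most $n-(k_1+k_2-1)+1$; Randriambololona's theorem makes this rigorous, with the $\max\{0,\cdot\}$ handling the degenerate regime $k_1+k_2-1>n$, where the bound collapses to $d_{C_1\star C_2}=1$.

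Next, I would verify the ``if'' direction of the equality characterization, which follows directly from Proposition \ref{star}(iv): if $C_1=GRS_{k_1}(\alpha,v)$ and $C_2=GRS_{k_2}(\alpha,w)$ with a common evaluation tuple $\alpha$, then
\begin{equation}
C_1\star C_2 = GRS_{\min\{k_1+k_2-1,n\}}(\alpha,v\star w),
\end{equation}
which is itself MDS. Hence $d_{C_1\star C_2}-1 = n-\min\{k_1+k_2-1,n\} = \max\{0,n-(k_1+k_2-1)\}$, so the bound is achieved by any such pair.

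For the converse (``only if'') direction, I would invoke \cite{mirandola_zemor}, Theorem 14, which characterizes exactly the pairs of codes that saturate the Randriambololona bound. Under the non-degeneracy hypothesis that none of $C_1$, $C_2$, and $(C_1\star C_2)^\perp$ is the repetition code, their result forces both factors to be GRS codes sharing a common parameter $\alpha$. The repetition-code exclusions are essential: for instance, taking $C_1=\Rep(n)$ trivially yields $C_1\star C_2=C_2$ by Proposition \ref{star}(i), so the bound would be met by any MDS $C_2$, and a general MDS code over a small field need not be GRS.

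The genuine work sits inside the last step, where one must rule out non-GRS equality cases; this is precisely the content of the Mirandola--Zémor theorem, which is why I treat it as a black box rather than rederive it. The remaining pieces are bookkeeping around Proposition \ref{star}(iv) and the classical Singleton bound, which justifies calling the statement an ``immediate corollary'' of the two references. Were one forced to unpack the hard step, the main obstacle would be an algebraic-geometric / Cayley--Bacharach-type argument tying tightness of a multiplicative Singleton bound to the evaluation structure that defines GRS codes, which is exactly the technology developed in \cite{mirandola_zemor}.
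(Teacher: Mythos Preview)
Your proposal is correct and matches the paper's approach exactly: the paper gives no proof beyond the theorem header citing \cite{randi13} and \cite{mirandola_zemor}, together with the one-line remark immediately afterward that GRS codes achieve equality by Proposition~\ref{star}(iv). You have simply unpacked those citations and that remark, which is precisely what ``immediate corollary'' is meant to convey; the only tiny imprecision is your informal expectation that $\dim(C_1\star C_2)$ equals $k_1+k_2-1$, when what is needed (and what Randriambololona proves) is the lower bound $\dim(C_1\star C_2)\geq k_1+k_2-1$, but you correctly defer to the cited theorem for the rigorous statement.
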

By  Proposition \ref{star} (iv), GRS codes satisfy Theorem~\ref{MultSing} with equality.
 The following theorem instantiates our scheme in the case where the storage code $C$ and retrieval code $D$ are GRS.

\begin{theorem}\label{theorem2}
Let $C=GRS_k(\alpha,v)$ and consider the distributed storage system $Y = XG_C$ as in Section \ref{DSS}. Then for all $t$ such that $1\leq t\leq n-k$, there exists a retrieval code $D$ such that the PIR scheme constructed in Section \ref{main_scheme} has rate $\frac{n-(k+t-1)}{n}$ and protects against any $t$ colluding servers.
\end{theorem}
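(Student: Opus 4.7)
The plan is to take the retrieval code $D$ in the same GRS family as the storage code, namely $D = GRS_t(\alpha, w)$ for some $w \in (\F_q^\times)^n$ (any such $w$ works; the choice will not affect the parameters of interest). Matching the evaluation points $\alpha$ brings Proposition \ref{star}(iv) into play, giving tight control over both $C \star D$ and its dimension, while the GRS duality formula \eqref{inverse_coeff} describes $D^\perp$ explicitly. With these two tools, the rate and collusion threshold fall out of direct computation.

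First, I would compute the quantity $d_{C\star D}-1$, which determines the rate. By Proposition \ref{star}(iv) and the hypothesis $t \leq n-k$ (which ensures $k+t-1 < n$), we have $C\star D = GRS_{k+t-1}(\alpha, v\star w)$, an $[n, k+t-1]$ MDS code. Its minimum distance is $n-(k+t-1)+1$, so $d_{C\star D}-1 = n-(k+t-1)$, the desired numerator of the rate.

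Next, for privacy, the GRS dual formula from \eqref{inverse_coeff} gives $D^\perp = GRS_{n-t}(\alpha, u)$, which is an $[n, n-t]$ MDS code with minimum distance $t+1$. Theorem \ref{private} therefore guarantees protection against $d_{D^\perp}-1 = t$ colluding servers, matching the claim.

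Finally, to invoke Theorem \ref{correct} and conclude correct retrieval at rate $(d_{C\star D}-1)/n$, I would verify its hypotheses. Since $C = GRS_k(\alpha, v)$ is MDS, every $k$-subset of $[n]$ is an information set, so condition (ii) holds for any $J \subseteq [n]$ of size $\max\{d_{C\star D}-1, k\}$, regardless of how these two quantities compare. There is essentially no obstacle here: the theorem is really a packaging statement, assembling Theorems \ref{correct} and \ref{private} with Proposition \ref{star}(iv). The substance is the observation that the GRS family is closed under star products and duals and contains codes of every admissible dimension, so a single uniform construction tunes the privacy parameter $t$ anywhere in the range $1 \leq t \leq n-k$ while simultaneously saturating the multiplicative Singleton bound of Theorem \ref{MultSing}.
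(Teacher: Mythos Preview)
Your proposal is correct and follows essentially the same approach as the paper: choose $D = GRS_t(\alpha,\cdot)$, use Proposition~\ref{star}(iv) and the GRS duality formula to compute $d_{C\star D}-1 = n-(k+t-1)$ and $d_{D^\perp}-1 = t$, then invoke Theorems~\ref{correct} and~\ref{private}. You are in fact slightly more careful than the paper in explicitly checking that $k+t-1 < n$ (so that the star product has the expected dimension) and that the MDS property of $C$ satisfies condition~(ii) of Theorem~\ref{correct}.
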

\begin{proof}
We will give a linear code $D$ satisfying  $d_{D^\perp} - 1 = t$ and $d_{C\star D}-1 = n-(k+t-1)$; the theorem then follows immediately from Theorems \ref{correct} and \ref{private}.
Let $D=GRS_t(\alpha,u)$ for an arbitrary vector $u \in {\F_q^\times}^n$, since its dual is also MDS we see that $d_{D^\perp} = t+1$. Then $(C\star D)^\perp=GRS_{n-k-t+1}(\alpha,w)$ with $w$ given by \eqref{inverse_coeff}. It follows that $d_{C \star D}-1=n-(k+t-1)$ as desired.
\end{proof}

When $k+t > n$ our PIR scheme has rate zero, regardless of the storage code chosen. This can be seen readily from Theorem~\ref{MultSing}, as the retrieval code $D$ must have rank at least $t$, so $d_{C\star D}-1\leq\max\{0 , n-(k+t-1)\} = 0$ and thus the number of retrieved symbols per iteration is $d_{C\star D}-1= 0$.

When $k = 1$, that is, the data is stored via a replication system, our scheme provides a rate of $\frac{n-t}{n}$.  It is known by \cite{sun_jafar_2} that the capacity for PIR in this case is $\frac{1-t/n}{1-(t/n)^m}$.  Thus our scheme is asymptotically capacity-achieving, in that the resulting rates approach capacity as the number of files $m\rightarrow\infty$.  

Similarly, when $t = 1$, that is, without server collusion, our scheme provides a rate of $\frac{n-k}{n}$.  By \cite{bananaman} the capacity for PIR in this case is $\frac{1-k/n}{1-(k/n)^m}$, thus our schemes are again asymptotically capacity-achieving.  

The capacity of coded PIR with colluding servers is known when $m =2$ and $k = n-1$ by \cite{sun_jafar_MDS_TPIR}, but no general result is known for $k>1$ and $t>1$. A previous version of this article conjectured the following:
\begin{conjecture}[Disproven, see \cite{sun_jafar_MDS_TPIR}]~\label{capacity}
Let $C$ be an $[n,k,d]$ code that stores $m$ files via the distributed storage system $Y = XG_C$, and fix $1\leq t\leq n-k$. Any PIR scheme  for $Y$ that protects against any $t$ colluding servers  has rate at most $\frac{1-\frac{k+t-1}{n}}{1-{(\frac{k+t-1}{n})}^m}$.
\end{conjecture}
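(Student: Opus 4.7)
The plan is to mount an information-theoretic converse in the style of Sun--Jafar, generalised so as to simultaneously account for the storage-code structure and the $t$-collusion constraint. The target bound $\frac{1-r}{1-r^m}$ with $r = (k+t-1)/n$ is an $m$-term geometric sum, so the argument would proceed by induction on the number of files $m$, the engine being a single ``one-step'' inequality.

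First, fix the standard information-theoretic setup. Let $W^1,\ldots,W^m$ be independent uniform files, encoded into stored data $X_1,\ldots,X_n$ via $X = (W^1,\ldots,W^m)G_C$; let $Q^i = (Q^i_1,\ldots,Q^i_n)$ denote the queries (across all $s$ iterations) for retrieving file $i$, and $A^i_j$ the corresponding answer tuple from server $j$. The axioms available are: \emph{correctness}, $H(W^i \mid A^i, Q^i) = 0$; \emph{privacy}, $I(Q^i_T; i) = 0$ for every $T \subseteq [n]$ with $|T| = t$, which in particular implies that $(Q^i_T, A^i_T)$ is identically distributed across $i$ since $A^i_T$ is a deterministic function of $Q^i_T$ and $X_T$; and the \emph{storage} property that $H(W^{[m]} \mid X_K) = 0$ whenever $K$ is an information set of $C$, together with the linear dependencies that $C$ forces among the $X_j$.

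The heart of the proof is a one-step recursion. Writing $D_m$ for the total expected download cost normalised by $H(W^1)$, the goal is to establish
\[
D_m \;\ge\; 1 \;+\; \frac{k+t-1}{n}\, D_{m-1},
\]
with base case $D_1 \ge 1$ coming from correctness alone; iterating then yields $D_m \ge \frac{1-r^m}{1-r}$, hence the claimed rate bound $R = 1/D_m \le \frac{1-r}{1-r^m}$. To prove the inductive step, I would pick a subset $S \subseteq [n]$ of size $n - (k+t-1)$ and argue that after the user has learned $W^i$, the answers $A^i_S$ together with the queries $Q^i$ must contain enough information to simulate a valid $t$-private PIR scheme for the remaining $m-1$ files over a reduced database on $[n] \setminus S$. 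The two levers are exactly (i) privacy, which allows one to re-interpret queries to any $t$-subset as queries for a different file index, and (ii) the storage code, which determines any $k-1$ further coordinates of $X$ once $k$ coordinates are known along an information set. Choosing the $(k+t-1)$-subset to be the disjoint union of a $t$-subset (for privacy) and a $(k-1)$-subset extending it inside an information set (for the code) is the natural mechanism to get the combined factor $(k+t-1)/n$.

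The main obstacle, which is exactly where the approach is delicate, is the justification of this one-step recursion with the precise constant $k+t-1$. The boundary cases $k=1$ (Sun--Jafar, constant $t/n$) and $t=1$ (Banawan--Ulukus, constant $k/n$) are genuinely easier because only one of the two mechanisms is active: in the mixed regime one must carefully disentangle the distribution-level privacy symmetries from the deterministic code-level dependencies, without double-counting the information that the two constraints pool together. It is precisely at this fusion step that the naive plan above overestimates the savings, which is why the conjecture was subsequently falsified in \cite{sun_jafar_MDS_TPIR}; any honest execution of the recursion-based strategy should be expected to break here.
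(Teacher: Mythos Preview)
The statement you are attempting to prove is a \emph{disproven conjecture}, and the paper contains no proof of it. On the contrary, the paper explicitly records that the conjecture was refuted in \cite{sun_jafar_MDS_TPIR} by an explicit counterexample: a scheme with $n=4$, $k=2$, $t=2$, $m=2$ achieving rate $3/5 > 4/7$, the latter being the conjectured bound. There is therefore no ``paper's own proof'' to compare your proposal against.

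To your credit, you identify this yourself in the final paragraph: the fusion step where one tries to combine the $t$-privacy symmetry with the $(k-1)$ code dependencies to extract a single factor of $(k+t-1)/n$ is exactly where the argument fails, and you correctly flag that any honest execution should break there. But this means the proposal is not a proof at all---it is a sketch of a strategy together with an admission that the strategy does not close. Since the target inequality is false, no strategy can close it; the right deliverable here is not a proof but either (i) a pointer to the counterexample, or (ii) an analysis pinpointing which inequality in the Sun--Jafar/Banawan--Ulukus recursion becomes invalid once both $k>1$ and $t>1$. Your last paragraph gestures at (ii), and that is the only salvageable content of the proposal.
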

However, this conjecture was disproven in \cite{sun_jafar_MDS_TPIR}, where the authors exhibited an explicit PIR scheme for $m = 2$ files distributed over $n= 4$ servers using a rate $1/2$ storage code $C$, which protects against $t = 2$ collusion.  The exhibited scheme has rate $3/5$, while our conjectured capacity was $4/7$.

We will refrain from stating any further conjectures on the capacity of coded PIR with server collusion. However, the question remains open as to whether our schemes are asymptotically capacity-achieving as $m\rightarrow\infty$ for general $k$ and $t$.  This is consistent with the results in~\cite{sun_jafar_MDS_TPIR}, where it is also proven that, although positive retrieval rates are possible when $k+t>n$, the rates decrease to $0$ as $m\to \infty$. We further remark that the rate of our schemes do not depend on the number of files stored, and the field size required to achieve the rates of Theorem \ref{theorem2} is only $q \geq n$, needed to guarantee the existence of GRS codes.  This is in contrast with the capacity-achieving schemes of \cite{sun_jafar_1}, \cite{sun_jafar_2}, \cite{sun_jafar_MDS_TPIR}, wherein the field size grows as $q = O(n^m)$. Similarly, for the scheme of \cite{zhang_ge} for MDS coded data with colluding servers, which outperforms our scheme with the number of files $m$ is small, the field size is required to satisfy $q\geq O\left(\binom{n}{k}\right)$.

In Fig.\ \ref{PIR_rates_n12_m8} we plot for $n = 12$ servers the achievable PIR rates as a function of the number of colluding servers $t$, for various code rates $k/n$.  The black curve represents the capacity for the case of $k = 1$, that is, when the data is stored using a replication system \cite{sun_jafar_2}, while the asterisks represent the capacity obtained in \cite{bananaman} for the non-colluding case $t=1$ at different code rates. One can see that even for a relatively small number of files and relatively large amount of collusion, our scheme is quite close to capacity.

\begin{figure}
\centering
\includegraphics[width=.6\textwidth]{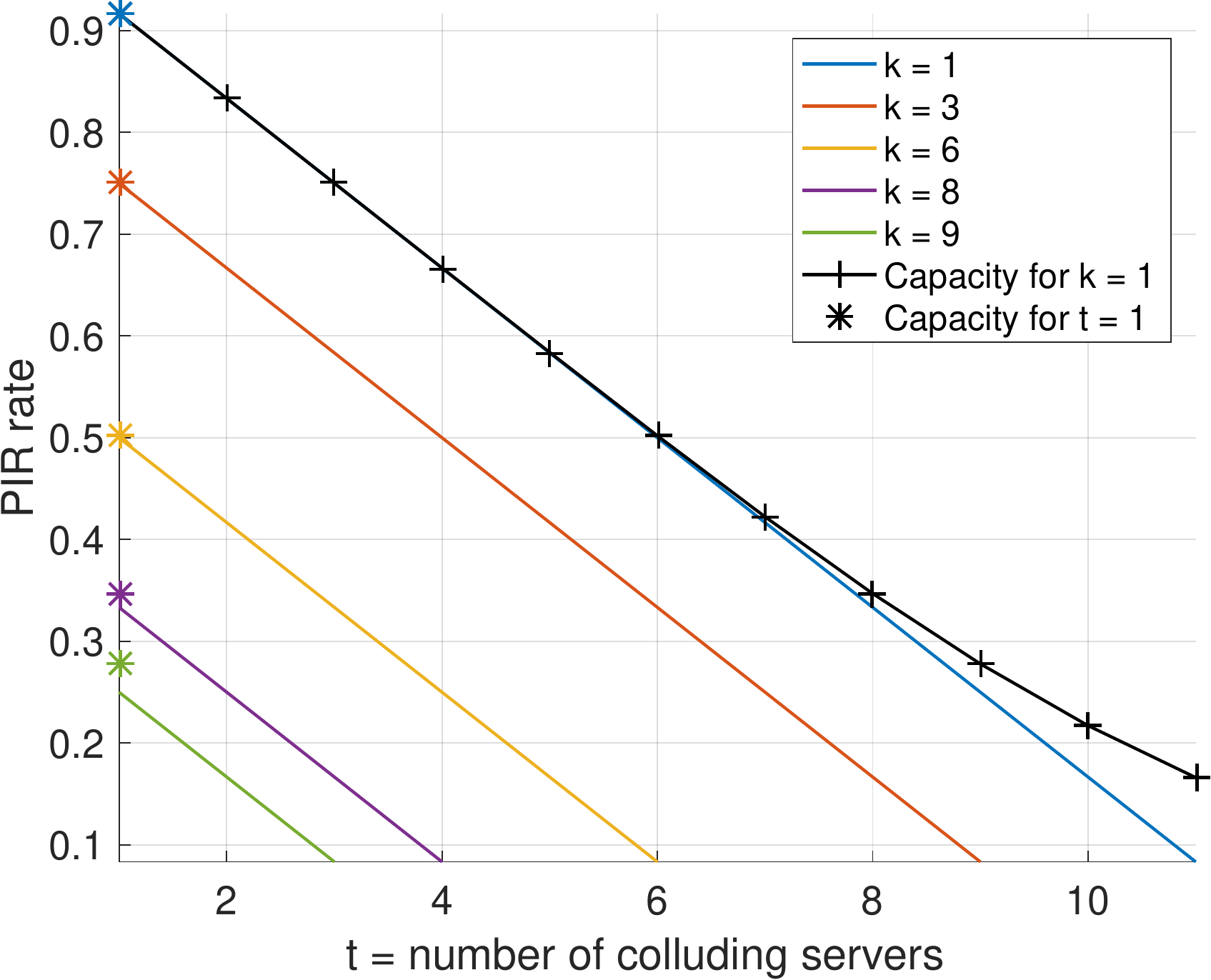}
\caption{Achievable PIR rates for $n = 12$ servers and $m = 8$ files, as a function of $t=$ the number of colluding servers, for various storage code rates. The black curve represents the PIR capacity for the case of $k=1$ (data stored via a replication system) as computed from \cite{sun_jafar_2}. The asterisks show the capacity for the non-colluding case $t=1$ as given in \cite{bananaman}. The PIR capacity is unknown when $t\geq 2$ and $k \geq 2$.}\label{PIR_rates_n12_m8}
\end{figure}

\section{An example in the intermediate regime}
 We will illustrate our scheme with an explicit example in the case when $t=2$ and $k = 2$. This is the first case not covered by our Examples~\ref{example1}--\ref{example3}. The storage code $C$ is MDS, with $[n,k,d]=[5,2,4]$.  We will have $c = k = 2$ and hence we require only $b = 1$ row per file and $s = 1$ iteration, thus we are free to ignore these parameters in what follows.

\begin{example}\label{bigexample}
Let $\alpha=[0,1,2,3,4]\in\F_5^5$ and let $\mathbf{1}\in\F_5^5$ be the all-ones vector. Consider the storage code $C=GRS_2(\alpha,\mathbf{1})$ over $\F_5$, encoded by its systematic generator matrix  \begin{equation}\label{eq:Cmatrix} G_C:=\begin{bmatrix}
		1 & 0 & 4 & 3 & 2\\
		0 & 1 & 2 & 3 & 4\\
	\end{bmatrix}
	 \end{equation}
	 as in~\eqref{eq:systematic}. So each file $x^i$ is divided into two blocks $x^i(1)$ and $x^i(2)$, and distributed onto the five servers as follows: 
 \begin{equation}\label{eq:Storage} \left\{\begin{array}{ll}
 x^i(1) & \mbox{on server } 1 \\
 x^i(2) & \mbox{on server } 2 \\
  4x^i(1)+2x^i(2) & \mbox{on server } 3 \\
   3x^i(1)+3x^i(2) & \mbox{on server } 4 \\
   2x^i(1)+4x^i(2) & \mbox{on server } 5.
\end{array}
	\right.
	 \end{equation}	 
	 
	 The random codewords $d^1,\ldots, d^m$ used to query the servers will be drawn from $D=GRS_2(\alpha,\mathbf{1})$, for which we choose the canonical generator matrix \begin{equation}\label{eq:Dmatrix} G_D:=\begin{bmatrix}
		1 & 1 & 1 & 1 & 1\\
		0 & 1 & 2 & 3 & 4
	\end{bmatrix}.
	 \end{equation} Note that $D^\perp$ is a $[5,3]$ MDS code, so our scheme protects against $t=d_{D^\perp}-1=2$ colluding servers. The reason we choose different generator matrices for $C$ and $D$ is practical, as the systematic generator matrix is better for decoding, while the canonical generator matrix is preferable for computations. 
	 
	 Observe that $C\star D=GRS_3(\alpha,\mathbf{1})$. We compute its dual $(C\star D)^\perp=GRS_2(\alpha,u)$, where $u_i = (\prod_{j \neq i}(\alpha_i-\alpha_j))^{-1}$ for $i=1,\ldots , 5$. Since $\alpha_i$ runs over the entire field $\F_5$, these products are unusually easy to evaluate, indeed we have $u_i=-1$ for all $i=1,\ldots , 5$. Thus $(C\star D)^\perp =GRS_2(\alpha, -\mathbf{1})=GRS_2(\alpha, \mathbf{1})$ so $(C\star D)^\perp $ is identical to $C$ and we use the same generator matrix $G_H=G_C$.

	 For each file index $\ell\in[m]$, we sample uniformly at random from $D$ by multiplying $G_D$ on the left by a uniform random vector $z^\ell = [z^\ell(1),z^\ell(2)]\in \F_5^2$, so that $d^\ell = z^\ell G_D$ and $d_j = [d^1(j)\cdots d^m(j)]$ for $j \in [m]$.  We let $z_1 = [z^\ell(1)\cdots z^m(1)]$ and $z_2 = [z^1(2)\cdots z^m(2)]$ which are independent and uniformly distributed over $\F_5^m$.
	 
	 Suppose we want to retrieve the file $x^i$ for some $i\in[m]$. We select $d_{C\star D}-1=2$ servers from which to download blocks from $x^i$, and for simplicity we here choose the systematic nodes. The queries $q^{i}_j$ sent to the servers will now be the following vectors in $\F_5^m$: 
\begin{equation}
\begin{aligned}
q^i_1 &= d_1 + e_i &=&\ z_1 + e_i \\
q^i_2 &= d_2 + e_i &=&\ z_1 + z_2 + e_i \\
q^i_3 &= d_3 &=&\ z_1 + 2z_2 \\
q^i_4 &= d_4 &=&\ z_1 + 3z_2 \\
q^i_5 &= d_5 &=&\ z_1 + 4z_2
\end{aligned}
\end{equation}
where $e_i$ is the $i^{th}$ standard basis vector.  Observe that for each pair of servers, the corresponding joint distribution of  queries is the uniform distribution over $(\F_5^m)^2$.
	 
	 The servers now respond by projecting their stored data onto the query vector, whence we obtain a response vector 
	 \begin{equation}\label{eq:Response} r^i = \begin{bmatrix}
\sum_{\ell = 1}^m d^\ell(1)x^\ell(1) &+ x^i(1)\\
\sum_{\ell = 1}^m (d^\ell(1)+d^\ell(2))x^\ell(2) &+ x^i(2) \\
  \sum_{\ell = 1}^m (d^\ell(1)+2d^\ell(2)) (4x^\ell(1)+2x^\ell(2))& \\
 \sum_{\ell = 1}^m (d^\ell(1)+3d^\ell(2))(3x^\ell(1)+2x^\ell(2)) &\\
  \sum_{\ell = 1}^m (d^\ell(1)+4d^\ell(2))(2x^\ell(1)+4x^\ell(2))&
\end{bmatrix}\in C\star D + \begin{bmatrix}
x^i(1)\\
x^i(2) \\
0 \\
0 \\
0
\end{bmatrix}.
	 \end{equation}	 
To finally decode the desired symbols, we now compute the matrix product $G_{(C\star D)^\perp}\cdot r^{i}$. One calculates that indeed 
\begin{equation}
G_{(C\star D)^\perp}\cdot r^{i}=G_{(C\star D)^\perp}\cdot\begin{bmatrix}
x^i(1)\\
x^i(2) \\
0 \\
0 \\
0
\end{bmatrix} = \begin{bmatrix}
x^i(1)\\
x^i(2) 
\end{bmatrix}.
\end{equation}
We have therefore extracted the two desired data blocks using $5$ queries, while maintaining privacy against $t=2$ colluding servers.
\end{example}

\bibliographystyle{siamplain}
\bibliography{references}

\begin{thebibliography}{10}

\bibitem{bananaman}
{\sc K.~Banawan and S.~Ulukus}, {\em The capacity of private information
  retrieval from coded databases},  (2016),
  \url{https://arxiv.org/abs/1609.08138}.

\bibitem{bananaman2}
{\sc K.~Banawan and S.~Ulukus}, {\em Multi-message private information
  retrieval: Capacity results and near-optimal schemes},  (2016),
  \url{https://arxiv.org/abs/1702.01739}.

\bibitem{blackburn2016small}
{\sc S.~Blackburn, T.~Etzion, and M.~Paterson}, {\em {PIR} schemes with small
  download complexity and low storage requirements},  (2016),
  \url{https://arxiv.org/abs/1609.07027}.

\bibitem{chor1995private}
{\sc B.~Chor, O.~Goldreich, E.~Kushilevitz, and M.~Sudan}, {\em Private
  information retrieval}, in IEEE Annual Symposium on Foundations of Computer
  Science, 1995, pp.~41--50.

\bibitem{chor1998private}
{\sc B.~Chor, E.~Kushilevitz, O.~Goldreich, and M.~Sudan}, {\em Private
  information retrieval}, Journal of the ACM (JACM), 45 (1998), pp.~965--981.

\bibitem{fazeli2015pir}
{\sc A.~Fazeli, A.~Vardy, and E.~Jaakobi}, {\em {PIR} with low storage
  overhead: Coding instead of replication},  (2015),
  \url{https://arxiv.org/abs/1505.06241}.

\bibitem{Kumar}
{\sc S.~Kumar, E.~Rosnes, and A.~G. i~Amat}, {\em Private information retrieval
  in distributed storage systems using an arbitrary linear code},  (2016),
  \url{http://arxiv.org/abs/1612.07084}.

\bibitem{mirandola_zemor}
{\sc D.~Mirandola and G.~Zémor}, {\em Critical pairs for the product singleton
  bound}, IEEE Transactions on Information Theory, 61 (2015), pp.~4928--4937,
  \href{http://dx.doi.org/10.1109/TIT.2015.2450207}
  {doi:10.1109/TIT.2015.2450207}.

\bibitem{randi13}
{\sc H.~Randriambololona}, {\em An upper bound of {S}ingleton type for
  componentwise products of linear codes}, IEEE Transactions on Information
  Theory, 59 (2013), pp.~7936--7939.

\bibitem{shah2014}
{\sc N.~B. Shah, K.~V. Rashmi, and K.~Ramchandran}, {\em One extra bit of
  download ensures perfectly private information retrieval}, in 2014 IEEE
  International Symposium on Information Theory, 2014, pp.~856--890.

\bibitem{shah2012}
{\sc N.~B. Shah, K.~V. Rashmi, K.~Ramchandran, and P.~V. Kumar}, {\em
  Privacy-preserving and secure distributed storage codes},  (2012),
  \url{http://people.eecs.berkeley.edu/~nihar/publications/privacy_security.pdf}.

\bibitem{Singleton}
{\sc R.~C. Singleton}, {\em Maximum distance q-nary codes}, IEEE Transactions
  on Information Theory, 10 (1964), pp.~116--118.

\bibitem{sun_jafar_2}
{\sc H.~Sun and S.~Jafar}, {\em The capacity of robust private information
  retrieval with colluding databases},  (2016),
  \url{https://arxiv.org/abs/1605.00635}.

\bibitem{sun_jafar_1}
{\sc H.~Sun and S.~Jafar}, {\em The capacity of private information retrieval},
  IEEE Transactions on Information Theory, 63 (2017), pp.~4075--4088,
  \href{http://dx.doi.org/10.1109/TIT.2017.2689028}
  {doi:10.1109/TIT.2017.2689028}.

\bibitem{sun_jafar_MDS_TPIR}
{\sc H.~Sun and S.~Jafar}, {\em Private information retrieval from {MDS} coded
  data with colluding servers: Settling a conjecture by {F}reij-{H}ollanti et
  al.},  (2017), \url{https://arxiv.org/abs/1701.07807}.

\bibitem{razan_salim}
{\sc R.~Tajeddine and S.~{El Rouayheb}}, {\em Private information retrieval
  from {MDS} coded data in distributed storage systems}, in 2016 IEEE
  International Symposium on Information Theory (ISIT), July 2016,
  pp.~1411--1415.
\newblock See \url{http://www.ece.iit.edu/~salim/} for an extended version.

\bibitem{VanLint}
{\sc J.~H. van Lint and R.~M. Wilson}, {\em On the minimum distance of cyclic
  codes}, IEEE Transactions on Information Theory, 32 (1986), pp.~23--40.

\bibitem{zhang_ge}
{\sc Y.~Zhang and G.~Ge}, {\em A general private information retrieval scheme
  for {MDS} coded databases with colluding servers},  (2017),
  \url{https://arxiv.org/abs/1704.06785}.

\end{thebibliography}

\end{document}